\documentclass[journal, twocolumn]{IEEEtran}
\usepackage{cite}

\ifCLASSINFOpdf
   \usepackage[pdftex]{graphicx}
  \DeclareGraphicsExtensions{.pdf,.jpeg,.png}
\else
  \usepackage[dvips]{graphicx}
  \DeclareGraphicsExtensions{.eps}
\fi

\usepackage[cmex10]{amsmath}
\usepackage{amsthm}

\usepackage{algorithmic}

\usepackage{array}

\usepackage{fixltx2e}

\newtheorem{prop}{Proposition}
\usepackage{dblfloatfix}

\ifCLASSOPTIONcaptionsoff
  \usepackage[nomarkers]{endfloat}
 \let\MYoriglatexcaption\caption
 \renewcommand{\caption}[2][\relax]{\MYoriglatexcaption[#2]{#2}}
\fi

\usepackage{url}
\usepackage{savesym}
\savesymbol{appendices}
\usepackage{epsfig}
\usepackage{epstopdf}
\usepackage{multicol}
\usepackage{paralist} 
\usepackage{verbatim}
\usepackage{amssymb}

\usepackage{mdwmath}
\usepackage{mdwtab}
\usepackage[section]{placeins}
\usepackage{caption}
\usepackage{subcaption}
\usepackage{xcolor}
\usepackage{color,soul}
\usepackage{appendix, apptools}
\restoresymbol{appendix}{appendices}
\begin{document}

\title{On the Performance of Quickest Detection Spectrum Sensing: The Case of Cumulative Sum}
\author{ 
Ahmed~Badawy, \IEEEmembership{Member,~IEEE,} Ahmed El Shafie, \IEEEmembership{Senior Member,~IEEE} and Tamer~Khattab, \IEEEmembership{Senior Member,~IEEE}
\thanks{A. Badawy was with Politecnico di Torino, DET. (e-mail: ahmed.badawy@polito.it). Ahmed El Shafie is with Qualcomm Tech. Inc, San Diego, CA 92121 USA (e-mail: ahmed.salahelshafie@gmail.com). Tamer Khattab is with Qatar University, Electrical Engineering Department. (e-mail: tkhattab@ieee.org).}

\thanks{This research work is funded by Qatar National Research Fund, QNRF (a member of Qatar Foundation, QF) under grant number NPRP 7-923-2-344. The statements made herein are the sole responsibility of the authors.}
}
\maketitle

\begin{abstract}
Quickest change detection (QCD) is a fundamental problem in many applications. Given a sequence of measurements that exhibits two different distributions around a certain flipping point, the goal is to detect the change in distribution around the flipping point as quickly as possible. The QCD problem appears in many practical applications, e.g., quality control, power system line outage detection, spectrum reuse, and resource allocation and scheduling. In this paper, we focus on spectrum sensing as our application since it is a critical process for proper functionality of cognitive radio networks. Relying on the cumulative sum (CUSUM), we derive the probability of detection and the probability of false alarm of CUSUM based spectrum sensing. We show the correctness of our derivations using numerical simulations.
\end{abstract}

\begin{IEEEkeywords}
CUSUM detection, cognitive radio, quickest detection, spectrum sensing.
\end{IEEEkeywords}

\IEEEpeerreviewmaketitle

\section{Introduction}
The increasing demand of spectrum slots is a result of the exponential growth of wireless networks. On the other hand, this growth is facing the classical spectrum scarcity problem.  Statistical analysis of spectrum usage presented in \cite{Haykin} shows that the spectrum is underutilized. Therefore, interest in cognitive radio (and multi-tier priority access \cite{kazi2019next}) networks has also grown accordingly.  In cognitive radio networks, spectrum slots are allocated to users in a dynamic fashion. At first, the spectrum slot is assigned to its owner, also known as primary user (PU). Users with less priority, also known as secondary users (SU) are allowed to access this designated spectrum slot whenever its owner is not exploiting it.
s
Spectrum sensing is a cornerstone in the deployment of cognitive radio networks. Spectrum sensing can be achieved through different techniques including energy detection \cite{8070358} and cyclostationary detection \cite{8746083, 6951976}. On the other hand, signal detection based on probabilistic models, i.e., maximum-likelihood-ratio test (MLRT) and general-likelihood-ratio test (GLRT) \cite{Kay,poorbook, Poor1, PoorJournal, 7565143} exploits the distributions of the received signal under the two hypotheses (occupied or vacant spectrum slot) to decide on the presence or absence of the PU's signal. Moreover, spectrum sensing can be applied in local or cooperative fashions \cite{7247243}.


One critical problem in detection theory is the quickest change detection (QCD) problem. The objective of QCD is to detect the change point in a series of collected samples or measurements as quickly as possible, i.e., finding the point at which the distribution of the received samples changes. Applications of QCD are numerous, which includes spectrum sensing \cite{Poor1}, resource allocation and scheduling \cite{ren2017quickest}, power system line outage detection \cite{7041234} and  bioinformatics \cite{muggeo2010efficient}. 


A framework for sequential detection for cognitive radio networks is presented in \cite{Poor1}. In \cite{7355347} the authors derived an approximate closed-form expression for the distribution of the detection delay for quickest detection. A joint design based on  observation scheduling policy and stopping time that minimizes the detection delay for quickest detection is presented in \cite{ren2017quickest}. Furthermore, the authors in \cite{ren2017quickest} extended their study to the multi-channel sensing case.

	
In this paper, motivated by the great need to find closed-form expressions for false-alarm and detection probabilities for the above-mentioned critical applications, we revisit the problem in \cite{Poor1} and provide closed-form expressions for the false-alarm and detection probabilities under finite sensing interval. To the best of the authors knowledge, closed-form expressions for these probabilities in the considered problem do not exist in literature. In fact, it was stated in \cite{7355347} that exact analysis for this problem is intractable. The sought expressions are important in practical systems since the number of collected samples is finite and any quality-of-service optimization will require the knowledge of both detection and false-alarm probabilities. We give those probabilities in closed-form and verify all our findings through numerical evaluations.
		
\section{System Model} \label{sec_sys_model}
We consider an SU operating in frame basis \cite{LULU12}, as depicted in Fig. \ref{sys_model}. The time is partitioned into frames of equal length. Each frame consists of a spectrum sensing phase and data transmission phase. In case the decision during the spectrum sensing phase is declared to be existence of the PU's signal, the SU remains silent during the data transmission phase since the frame belongs to the PU. Otherwise, the SU starts to exploit the data transmission phase to transmit and receive its own data.

When spectrum sensing phase starts, the SU begins to collect samples, $y[\ell]$, where $\ell=1,2,\dots,N$ with $N$ denoting the maximum number of collected samples during the spectrum sensing phase. If the PU is not occupying this spectrum slot, $y[\ell] = w[\ell]$, where $w[\ell]$ is a zero-mean white Gaussian noise with variance $\sigma^2$. If the PU is occupying the frequency band, the received signal at the SU is $y[\ell]= x[\ell]+w[\ell]$, where $x[\ell] = h s[\ell]$ is the product of the channel coefficient\footnote{Note that $h$ incorporates the multipath components.}, $h$, and the PU signal, $s[\ell]$. The signal $x[\ell]$ is assumed to be an independent and identically distributed (i.i.d.) Gaussian signal with zero mean and variance $P$ \cite{7355347,Ying-Chang08, LULU12,Poor1}. When the PU signal is present, $y[\ell]$  follows  $\mathcal{N}(0,\sigma^2 + P)$, where $\mathcal{N}(\cdot,*)$ denotes a Gaussian distribution with mean $\cdot$ and variance $*$. Otherwise, $y[\ell]$  follows $\mathcal{N}(0,\sigma^2)$.

The spectrum sensing operation has two cases:
\begin{description}
  \item[A)] Detection of the entrance of the PU's signal; or
  \item[B)] Detection of exiting of the PU user, i.e., empty spectrum frame.
\end{description}
In the first case, the two detection hypotheses are defined as
\begin{align}
&H_0^A: y[\ell] = w[\ell],\hspace{0.95 in} \ell=1,\cdots, \tau-1\\
&H_1^A: y[\ell] =  x[\ell]+w[\ell],  \hspace{0.58 in} \ell=\tau,\cdots, N
\end{align}
where $\tau\in[1,N]$, $\tau$ is the time instant at which the PU enters the spectrum and $N$ is the total number of samples collected during the sensing time duration at each frame. When $\tau=1$, this indicates that the PU is already occupying the spectrum when the SU started its sensing time within the frame. Typically, the decision statistic is compared to a threshold to decide on the occupancy status of the spectrum. If the decision raises a flag at sample $N_s$, where $N_s\ge \tau$, the detection delay is $N_d=N_s-\tau$. However if the flag is raised when $N_s<\tau$, this indicates a false-alarm event with $N_f=\mathbb{E}_{f_{0}}[N_s]$ as the mean time to false alarm. In sequential detection, the objective is to minimize $N_d$ and maximize $N_f$. We are interested in evaluating the performance of the CUSUM technique, i.e., calculating the probability of detection and probability of false alarm when a flag is raised. 

In the second case, the two detection hypotheses are defined as
\begin{align}
&H_0^B: y[\ell] = x[\ell]+ w[\ell],\hspace{0.675 in} \ell=1,\cdots, \tau - 1\\
&H_1^B: y[\ell] =  w[\ell],  \hspace{1.05 in} \ell=\tau,\cdots, N.
\end{align}
When $\tau=1$, this indicates that the PU has already exited the spectrum when the SU began its spectrum sensing operation.

\begin{figure}
\centering
\includegraphics[width=3in]{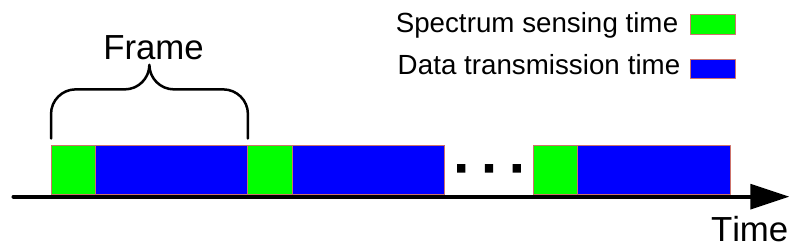}
\caption{Periodic spectrum sensing.}
\label{sys_model}
\end{figure}

\section{QCD based on CUSUM}
The utilization of CUSUM for spectrum sensing under the assumption of full knowledge of  distributions of the signal under the two hypotheses is a common practice in literature \cite{Poor1,7355347}. This assumption provides a performance upperbound benchmark for the case when the variance of the signal under $H_1$ is unknown.
In this letter, we investigate in details the first case, i.e., when the spectrum status changes from vacant to occupied. For the other case, same flow of steps can be used to detect the change of status of the spectrum from occupied to vacant; hence, we only point out the change in formulation to save space and make the discussion concrete. Next, we present necessary background review on CUSUM algorithm before we proceed to present our performance analysis work. 
\subsection{CUSUM Algorithm Background}
CUSUM algorithm is based on LRT \cite{poorbook}. When the spectrum slot is vacant, the collected samples by the SU follow a certain distribution, say distribution $F_0$, with density function $f_0$. Ditto, as the PU starts using the frequency band, the distribution changes to $F_1$ with density $f_1$. In this case, the detection of the entrance of the PU's signal is a sequential change detection problem where the received samples are processed sequentially and the decision statistic is calculated after each sample. The decision on the occupancy status of the spectrum is also made sequentially.
To this end, the log-likelihood ratio is calculated for each sample $y[\ell]$ sequentially through:


\begin{align}
l(y[\ell])&=\ln\left\{\frac{f_1(y[\ell])}{f_0(y[\ell])}\right\}, \notag \\ \label{likelihood}
&=\frac{P y^2[\ell]}{2(P+\sigma^2)\sigma^2}+\frac{1}{2}\ln \left\{\frac{\sigma^2}{P+\sigma^2}\right\},
\end{align}
where $f_\nu(y[\ell])$, $\nu \in \{0,1\}$, is the density function value at sample $y[\ell]$. The Kullback-Leibler divergence of $f_0$ from $f_1$ exhibits a negative drift before the entrance of PU signal and positive drift otherwise \cite{poorbook} and the decision statistic for the CUSUM test can be applied recursively using \cite{Poor1}:
\begin{eqnarray}
\label{eqn13} g_{\ell+1}&=&\max \left\{g_\ell+l(y[\ell+1]),0\right\},\\ 
\nonumber g_0&=&0.
\end{eqnarray}
In the case of changing from occupied to vacant, the same steps are followed with $f_0$ and $f_1$ being swapped with each other.


\subsection{Performance Analysis of CUSUM Algorithm}
Using CUSUM algorithm, after the spectrum status changes, as the number of collected samples increases, the probability of detection increases and, eventually, can reach its maximum value, i.e., 1. However, within the paradigm of periodic sensing, depicted in Fig. \ref{sys_model}, the total number of collected samples is bounded by the periodic sensing time. To this end, when applying CUSUM algorithm in cognitive radio applications where the size of the detection window is fixed, i.e., finite number of collected samples is used, the receiver operating characteristics (ROC), determined by the probability of false alarm and the probability of detection, are key performance metrics related to deciding on the status of the spectrum.



In \textit{Proposition 1}, we derive closed-form expressions for the detection and false-alarm probabilities of the decision statistic of the CUSUM test, $g_{\ell+1}$. The distribution of the received signal under the two hypotheses is the same as given in Section~\ref{sec_sys_model}.
\begin{prop}
The false-alarm probability for the $(\ell+1)^{\rm th}$ sample, $P_{f_{\ell+1}}$, is given by
\begin{align}
P_{f_{\ell+1}}
&=\left(1-\sum_{r=1}^{\ell+1}{\frac{\gamma\left(\frac{\ell+2-r}{2},\frac{\zeta}{2\sigma^2} \right)}{\Gamma\left(\frac{\ell+2-r}{2}\right)}}\right) \nonumber \\ &\left(\prod_{j=1}^{\ell}{\sum_{r=1}^{j}\frac{\gamma\left(\frac{j-r+1}{2},\frac{\zeta}{2\sigma^2} \right)}{\Gamma\left(\frac{j-r+1}{2}\right)}}\right). \label{P_f_F}
\end{align}
The total false-alarm probability is given by
\begin{align}
P_f=\sum_{\ell=1}^{\tau - 1}{P_{f_{\ell+1}}} \label{P_f_tot}
\end{align}
In addition, the detection probability is given by
\begin{align}
P_{d_{\ell+1}}&=\left(1-\sum_{r=\tau}^{\ell+1}{\frac{\gamma\left(\frac{\ell+2-r}{2},\frac{\zeta}{2(P+\sigma^2)} \right)}{\Gamma\left(\frac{\ell+2-r}{2}\right)}}\right)\nonumber \\ &\left(\prod_{j=\tau}^{\ell}{\sum_{r=\tau}^{j}\frac{\gamma\left(\frac{j-r+1}{2},\frac{\zeta}{2(P+\sigma^2)} \right)}{\Gamma\left(\frac{j-r+1}{2}\right)}}\right),\label{P_d_F}
\end{align}
\noindent where 
$\zeta=\frac{\lambda-(\ell+2-r)c_2}{c_1}$, $\lambda$ is the threshold,
$c_1 = \frac{P}{2(P+\sigma^2)\sigma^2}$, $c_2 = \frac{1}{2} \ln {\frac{\sigma^2}{P+\sigma^2}}$, $\gamma(\cdot,\cdot)$ is the lower incomplete gamma function, and $\Gamma(\cdot)$ is the gamma function. The total detection probability is given by
\begin{align}
P_d=\sum_{\ell=\tau}^{N}{P_{d_{\ell+1}}}. \label{P_d_tot}
\end{align}
\end{prop}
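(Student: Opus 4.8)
The plan is to reduce the CUSUM recursion to a reflected random walk, turn the threshold-crossing events into statements about chi-squared partial sums, and then assemble the first-passage probability from single-sample non-crossing probabilities.

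First I would use $l(y[\ell])=c_1 y^2[\ell]+c_2$ with $c_1>0$ and $c_2<0$ from \eqref{likelihood}, together with the Lindley/maximum representation $g_n=\max_{0\le k\le n}\sum_{i=k+1}^{n}l(y[i])=S_n-\min_{0\le k\le n}S_k$, where $S_n=\sum_{i=1}^{n}l(y[i])$ and $S_0=0$. Hence $g_n\ge\lambda$ iff $\sum_{i=k+1}^{n}y^2[i]\ge\zeta$ for some $k$, with $\zeta=(\lambda-(n-k)c_2)/c_1>0$; setting $n=\ell+1$, $k=r-1$ makes the run length $m=n-k=\ell+2-r$ and $\zeta$ exactly the constant in the statement. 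Under $H_0^A$ the samples are i.i.d.\ $\mathcal{N}(0,\sigma^2)$, so $\sigma^{-2}\sum_{i=r}^{n}y^2[i]\sim\chi^2_{n-r+1}$ and $\Pr\{\sum_{i=r}^{n}y^2[i]<\zeta\}=\gamma(\tfrac{n-r+1}{2},\tfrac{\zeta}{2\sigma^2})/\Gamma(\tfrac{n-r+1}{2})$; this chi-squared CDF is the origin of every incomplete-gamma term. Under $H_1^A$ with change point $\tau$ the same identity holds with $\sigma^2$ replaced by $\sigma^2+P$ for runs that start at $r\ge\tau$, which accounts for the $\zeta/(2(P+\sigma^2))$ in \eqref{P_d_F}.

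Next I would use the first-passage structure. With $N_s=\min\{j:g_j\ge\lambda\}$, a false alarm first declared at sample $\ell+1$ is the event $\{g_1<\lambda,\dots,g_\ell<\lambda,\ g_{\ell+1}\ge\lambda\}$. Because under $H_0^A$ the walk $S_n$ has negative drift and the statistic returns to $0$ frequently, I would treat the per-sample events $\{g_j<\lambda\}$ as approximately independent across $j$, giving $P_{f_{\ell+1}}\approx\Pr\{g_{\ell+1}\ge\lambda\}\prod_{j=1}^{\ell}\Pr\{g_j<\lambda\}$, which is the product form in \eqref{P_f_F}. For a single sample $j$ I would split over the last reset time $R_j=\max\{k\le j:g_k=0\}$: on $\{R_j=r-1\}$ one has $g_j=\sum_{i=r}^{j}l(y[i])$, so $\{g_j<\lambda\}$ is the disjoint union over $r=1,\dots,j$ of $\{R_j=r-1,\ \sum_{i=r}^{j}y^2[i]<\zeta\}$ together with $\{g_j=0\}$; approximating each block by the unconstrained probability $\Pr\{\sum_{i=r}^{j}y^2[i]<\zeta\}$ reproduces the sum $\sum_{r=1}^{j}\gamma(\tfrac{j-r+1}{2},\tfrac{\zeta}{2\sigma^2})/\Gamma(\tfrac{j-r+1}{2})$. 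Multiplying the two factors gives \eqref{P_f_F}, and \eqref{P_f_tot} follows because the events \emph{first flag at sample $\ell+1$}, $\ell=1,\dots,\tau-1$, are disjoint and their probabilities add. The detection formulas \eqref{P_d_F}--\eqref{P_d_tot} come from the same argument under $H_1^A$ after assuming the statistic effectively restarts at the change point (again justified by the negative $H_0^A$ drift making $g_{\tau-1}\approx 0$), so that all running indices begin at $\tau$ and the relevant variance is $\sigma^2+P$.

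The main obstacle is that this chain is not exact: the statistics $\{g_j\}$ are strongly dependent through the common sample history (the running minimum $\min_k S_k$ couples all partial sums), and the marginal law of $g_j$, a reflected random walk, has no elementary closed form, so the proposition is inherently an approximation. The delicate parts are (i) justifying, or controlling the error in, the across-sample independence that produces the outer product, and (ii) discarding the \emph{reset at $r-1$ and positive thereafter} constraints inside each single-sample probability. I expect (ii) to be the harder one; a rigorous variant would keep those constraints and estimate them with ballot-type or regeneration-cycle arguments for the reflected walk, at the price of losing the clean incomplete-gamma form.
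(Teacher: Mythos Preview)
Your proposal is correct and follows essentially the same route as the paper: decompose the CUSUM statistic by the last reset time to express $\{g_j<\lambda\}$ as a union over partial sums $\sum_{i=r}^{j}l(y[i])$, identify these as (scaled) chi-square variables to obtain the incomplete-gamma terms, and then factor the first-passage probability by treating the per-sample events as independent. The only difference is cosmetic---you invoke the Lindley/maximum representation explicitly and are more scrupulous in flagging both approximation steps (across-sample independence \emph{and} dropping the reset/positivity constraints within each block), whereas the paper derives the same formulas via the intermediate variable $Z_{\ell+1}=g_\ell+l(y[\ell+1])$ and mentions only the independence assumption explicitly.
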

\begin{proof}
 The proof of \textit{Proposition 1} is provided in Appendix~\ref{app:proof1}.
\end{proof}
Using our closed-form expressions and for a desired $P_f$ or $P_d$ under finite sensing time, the decision statistic is compared to the threshold ($\lambda$), which can be calculated according to the desired performance metric.
\section{Simulation Results}
\begin{figure}
	\centering
	\includegraphics[width=1 \columnwidth]{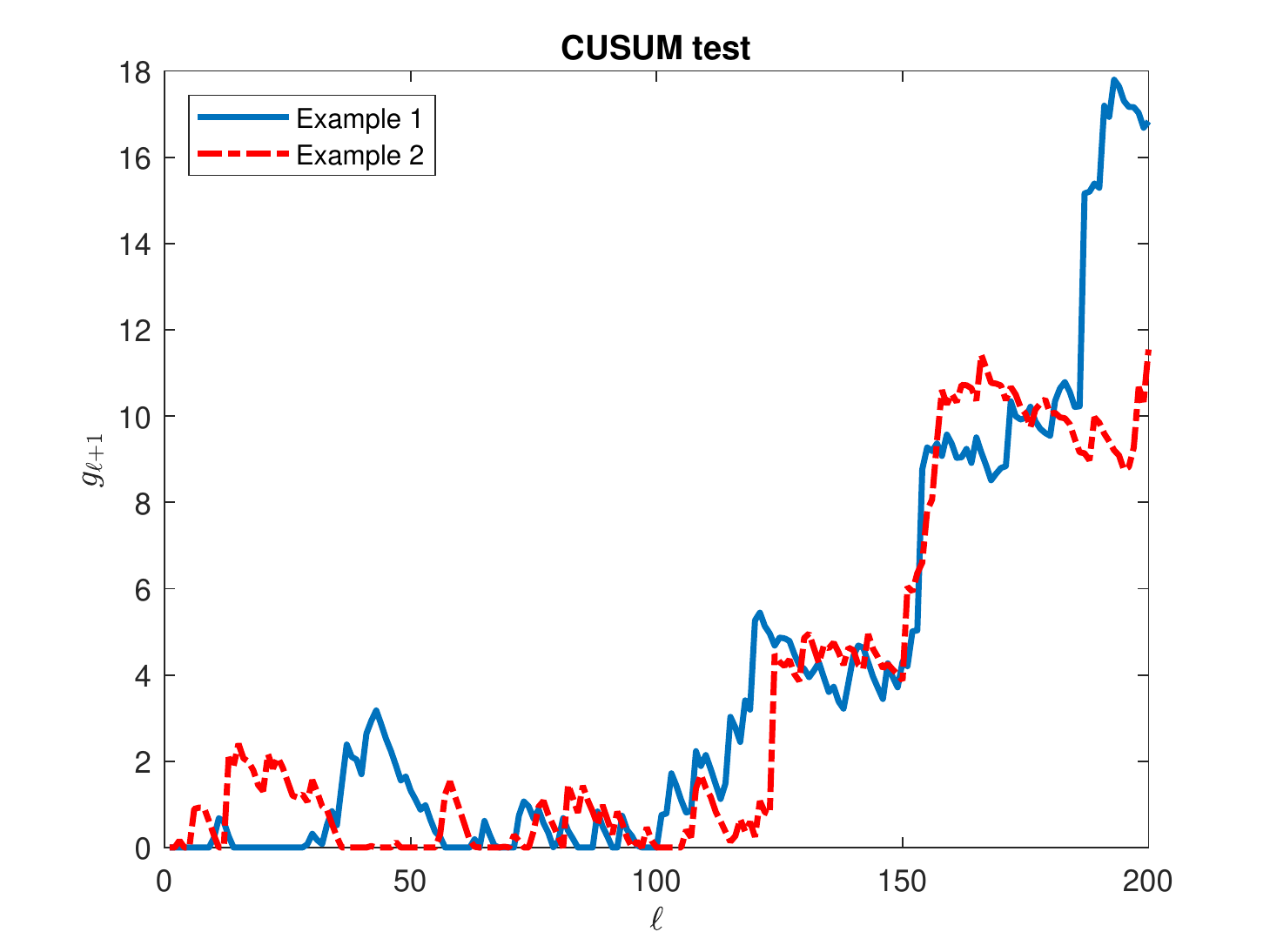}
	\caption{Examples of CUSUM test for 200 samples with PU entering the spectrum at the $100^{\rm th}$ sample, i.e., $\tau = 100$.}
	\label{CUSUM}
\end{figure}
We provide simulation results for the detection of the entrance of the PU signal. We compare the numerical calculations for $P_f$ and $P_d$ with our derived analytical approximations presented in (\ref{P_f_F}) to (\ref{P_d_tot}). We plot total probability of detection at various samples after the entrance of the PU signal, i.e., $P_{d}=\sum_{i=\tau}^{L}{P_{d_{i}}}$ vs $P_f$ for different signal to noise ratio (SNR) values, where $L$ is the test sample index which is in the range  $\{\tau,\tau+1,\cdots, N\}$. We run simulations for 200 samples with the first 100 belonging to $H_0$ and follow $F_0$ distribution, i.e., $\tau=100$, and the second 100 samples belong to the PU and follow $F_1$. Figure \ref{CUSUM} depicts two examples of the calculated decision statistic, $g_{\ell+1}$, for the 200 samples at SNR = 0 dB. Remember that both distributions are zero mean Gaussian with different variances as stated earlier. It is shown that as the PU enters the spectrum, the decision statistic, $g_{\ell+1}$, which processes the samples sequentially, starts to increase. Hence, $g_{\ell+1}$ is compared to a preset threshold and a decision about the status of the spectrum can be made after each received sample.

We run Monte Carlo simulations to numerically evaluate $P_f$ and $P_d$. Figures \ref{results1}, \ref{results3} and \ref{results2} show the ROC curves for both numerical and analytical results for $L=\tau + 20$, $L=\tau + 40$ and $L=\tau+60$ at SNR = 3, 0 and -3 dB, respectively. As shown in the figures, the derived analytical expressions provide close results to the numerical evaluations. 
In particular, the difference between the analytical approximation and numerical results is less than $5\%$ across the entire range of $P_f$ and $P_d$ for the presented operational SNR levels.
\begin{figure}
	\centering
	\includegraphics[width=1 \columnwidth]{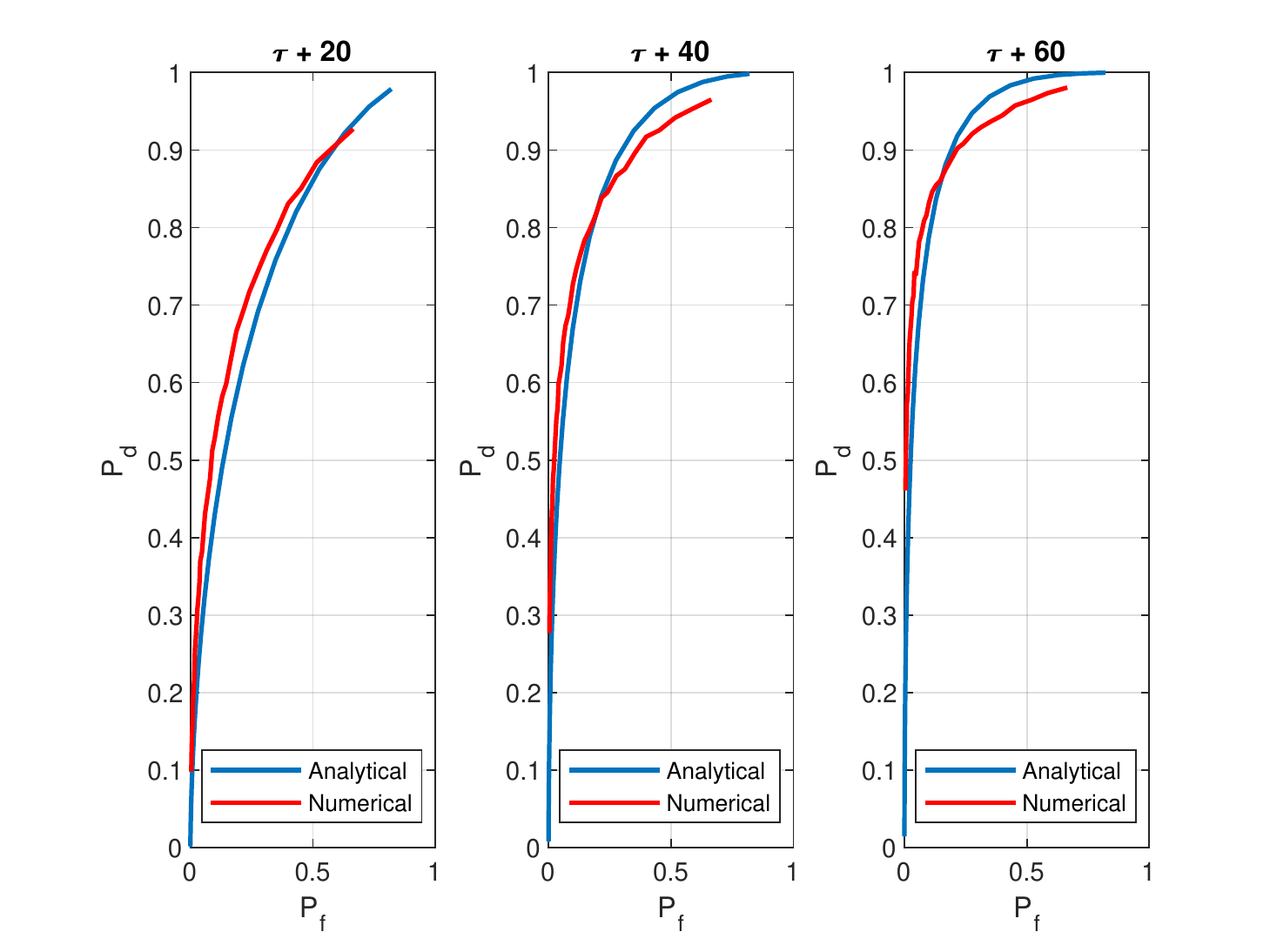}
	\caption{ROC curves for numerical and analytical simulations for SNR = -3 dB.}
	\label{results1}
\end{figure}
\begin{figure}
	\centering
	\includegraphics[width=1 \columnwidth]{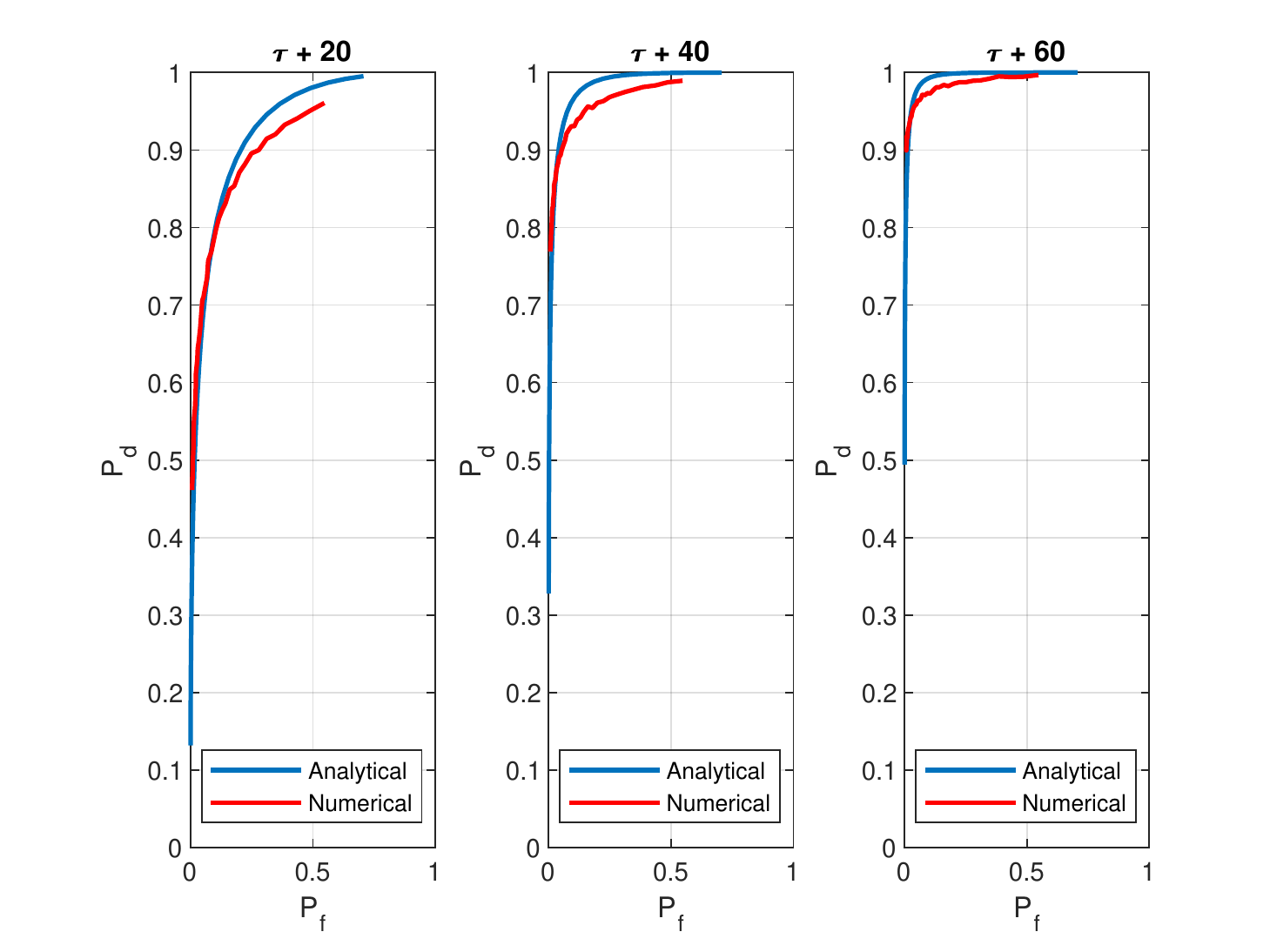}
	\caption{ROC curves for numerical and analytical simulations for SNR = 0 dB.}
	\label{results3}
\end{figure}
Our approximation, which is based on the independence assumptions between the correlated random variables, $Z_{\ell}$ (presented in the appendix), yields results very close to the exact ones. As expected, as the SNR level increases, $P_d$ increases. This is due to better separation between the two distributions when calculating the likelihood ratio term in (\ref{likelihood}). Similarly, as $L$ increases, $P_d$ increases. This is because as $L$ increases, more accumulation of the recursive decision statistic, as presented in (\ref{eqn13}) occurs after the entrance of the PU's signal. Hence, when the decision statistic exceeds the threshold, it is more likely that it is due to a detection of the PU's entrance.

The average time between false alarm and longest detection delay parameters discussed in, for example, \cite{Poor1}, assume infinite number of samples can be collected and processed sequentially. Hence, assume that $P_f$ will approach $0$ and $P_d$ will approach $1$ eventually. This is fundamentally different from our current application of assuming finite sensing duration. To this end, our closed-form expressions for $P_f$ and $P_d$ are very essential in estimating the delay needed to achieve a specific $P_d$ within the finite sensing duration and whether or not the required ROC of combined $P_d$ and $P_f$, which are typically $0.9$ and $0.1$, respectively, can be achieved under the current sensing parameters.
\begin{figure}
\centering
\includegraphics[width=1 \columnwidth]{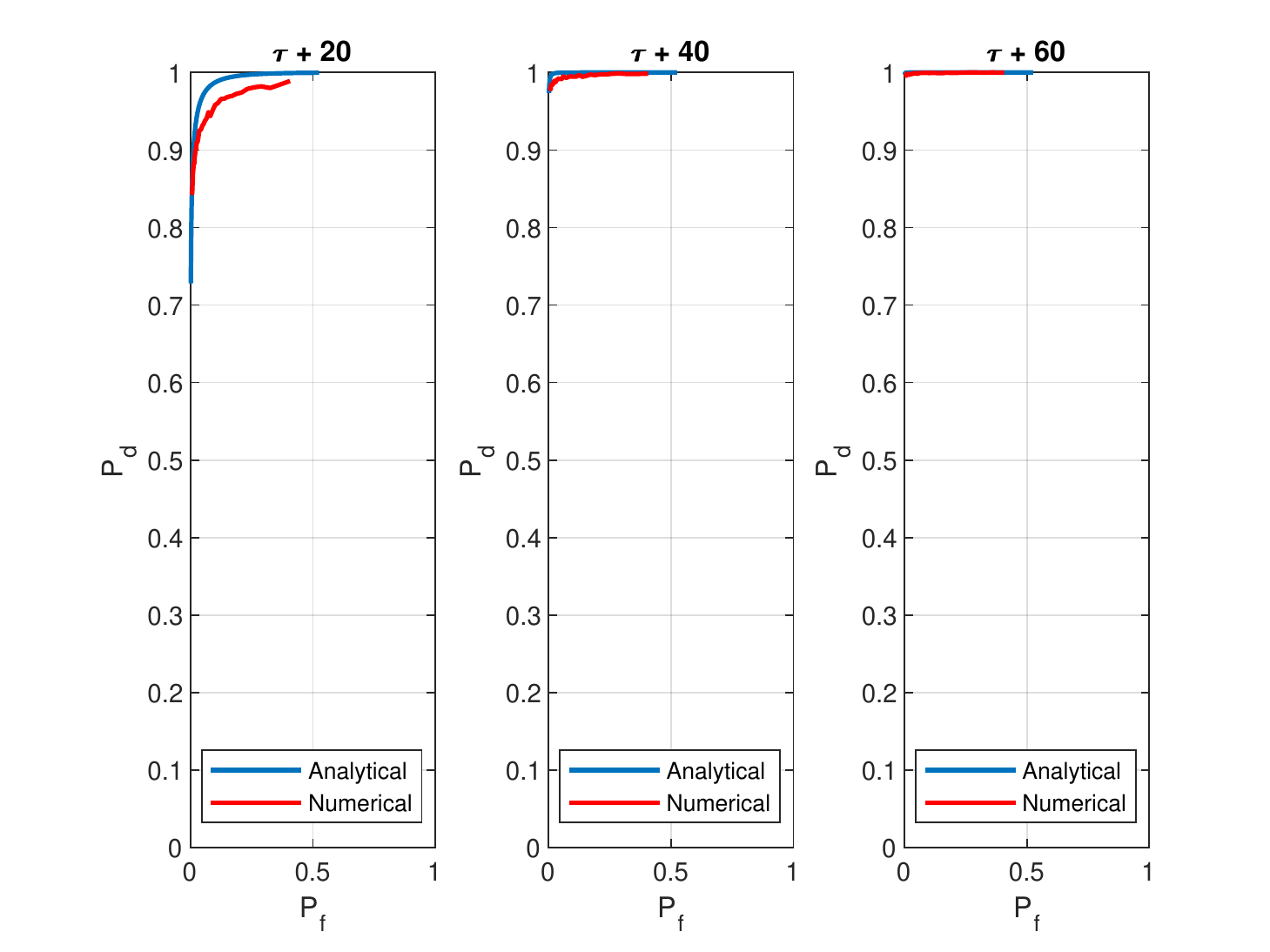}
\caption{ROC curves for numerical and analytical simulations for SNR = 3 dB.}
\label{results2}
\end{figure}
\section{Conclusions}
In this paper we derived closed-form expressions for the probability of false alarm and probability of detection for QCD-CUSUM sequential test. Spectrum sensing is used as an application of CUSUM test with detecting the entrance of PU signal as the example. Through simulation comparison between analytical and numerical results, we showed that our derived expressions provide a very close approximation. The provided results can be used in all applications that require channel sensing and will simplify the optimization formulation procedures to achieve better performance for various applications.


%





\ifCLASSOPTIONcaptionsoff
  \newpage
\fi



%
\appendices 
\section{Proof for Proposition 1}\label{app:proof1}
To derive closed-form expression for the probability of false alarm and the probability of detection, we need to define the probability distribution of the decision statistic defined in (\ref{eqn13}). Hence, Our starting point in the derivation is $ g_{\ell+1}$ in (\ref{eqn13}), which is defined using a $\max$ operation. Therefore, let us first consider two random variables (RVs) $V_1$ and $V_2$, the probability distribution function of their maximum, $U=\max[V_1,V_2]$, $F_U(u)$, can be given by
\begin{align}
  F_U(u)=\Pr\left\{V_1 \leq u, V_1>V_2\right\} +\Pr\left\{V_2 \leq u, V_1\leq V_2\right\}. \label{eqn147}
\end{align}
To this end and noting that $g_{\ell+1}$ in (\ref{eqn13}) is the maximum of a RV and zero, and defining this RV as $Z_{\ell+1}=g_\ell+l(y[\ell+1])$,  therefore by substituting in (\ref{eqn147}) for the probability distribution of $g_{\ell+1}$ at a threshold $\lambda$, we have
\begin{align}
  F_{g_{\ell+1}}(\lambda)&=\Pr\left\{Z_{\ell+1} \leq \lambda, Z_{\ell+1}>0\right\} \nonumber \\&+\Pr\left\{0 \leq \lambda, Z_{\ell+1}\leq 0\right\} \label{eq_max_two}
\end{align}
Here, we always have the threshold $\lambda>0$, hence (\ref{eq_max_two}) is now
\begin{align}
  F_{g_{\ell+1}}(\lambda)&=\Pr\left\{0<Z_{\ell+1} \leq \lambda\right\}+\Pr\left\{Z_{\ell+1}\leq 0\right\}.\label{eq_max_two2}
\end{align}
Note that the probability that a RV $U$ lies in an interval $[u_1, u_2]$, where $u_1<u_2$, can be given by
\begin{align}
    \Pr(u_1<U\leq u_2) = F_U(u_2) - F_U(u_1). \label{eqn478}
\end{align}
\noindent Using (\ref{eqn478}) to substitute for the first term in (\ref{eq_max_two2}) and noting that $F_{Z_{\ell+1}}(0) =\Pr\left\{Z_{\ell+1}\leq 0\right\}$,  (\ref{eq_max_two2}) now becomes
\begin{align}
  F_{g_{\ell+1}}(\lambda)&= F_{Z_{\ell+1}}(\lambda)- F_{Z_{\ell+1}}(0) + F_{Z_{\ell+1}}(0) \notag\\
  &=F_{Z_{\ell+1}}(\lambda)
\end{align}

\noindent where $F_{Z_{\ell+1}}$ represents the probability distribution of $Z_{\ell+1}$. The quantity $Z_{\ell+1}$ represents the likelihood ratios summation up to the sample $\ell+1$ with the possibility that each $g$ will be reset to zero at any sample inside the $\ell+1$ ($g_\ell=\max[g_{\ell-1}+l(y[\ell]),0]$) samples. It is noteworthy that, combinations or positions of each zero incident do not have an impact on the results, only position of the last occurring zero matters. Hence, $Z_{\ell+1}$ has $\ell+1$ possibilities. For example, if the output of the maximization with zero process resulted in no zero, $Z_{\ell+1}=\sum_{j=1}^{\ell+1}{l(y[j])}$. If a zero occurred at the first sample, $Z_{\ell+1}=\sum_{j=2}^{\ell+1}{l(y[j])}$ and so on. We have
\begin{align}
F_{Z_{\ell+1}}&=\Pr\left\{Z_{\ell+1}\leq \lambda\right\} \notag\\
&=\Pr\left\{\sum_{j=1}^{\ell+1}{l(y[j])}\leq \lambda\right\}+\Pr\left\{\sum_{j=2}^{\ell+1}{l(y[j])}\leq \lambda\right\} \nonumber \\
&+\cdots+\Pr\left\{{l(y[\ell+1])}\leq \lambda\right\} \notag\\
&=\sum_{r=1}^{\ell+1}{\Pr\left\{\sum_{j=r}^{\ell+1}{l(y[j])}\leq \lambda\right\}}.
\end{align}
\noindent Note that
\begin{align}
\sum_{j=r}^{\ell+1}{l(y[j])}\leq \lambda&=c_1\left(\sum_{j=r}^{\ell+1}{y^2[j]}\right)+(\ell+2-r)c_2\leq \lambda \notag\\
&=\sum_{j=r}^{\ell+1}{y^2[j]}\leq \zeta. \label{eqn_sum_chi}
\end{align}
\noindent where $\zeta=\frac{\lambda-(\ell+2-r)c2}{c_1}$, $c_1 = \frac{P}{2(P+\sigma^2)\sigma^2}$ and $c_2 = \frac{1}{2} \ln {\frac{\sigma^2}{P+\sigma^2}}$. Recalling that the collected samples are Gaussian RVs, the samples $y^2[\ell]$ are Chi-square RVs. Hence, $\sum_{j=r}^{\ell+1}{y^2[j]}$ is a Chi-square RV with $\ell+2-r$ degrees of freedom since it is a summation of Chi-square RVs. Consequently,
\begin{align}
  \Pr\left\{\sum_{j=r}^{\ell+1}{l(y[j])}\leq \lambda\right\}&=\Pr\left\{\sum_{j=r}^{\ell+1}{y^2[j]}\leq \zeta\right\} \notag\\
  &=\frac{\gamma\left(\frac{\ell+2-r}{2},\frac{\zeta}{2\sigma^2} \right)}{\Gamma\left(\frac{\ell+2-r}{2}\right)},
\end{align}
\noindent where $\gamma(\cdot,\cdot)$ denotes the lower incomplete gamma function and $\Gamma(\cdot)$ is the gamma function. This leads to
\begin{align}
F_{Z_{\ell+1}}=\sum_{r=1}^{\ell+1}{\frac{\gamma\left(\frac{\ell+2-r}{2},\frac{\zeta}{2\sigma^2} \right)}{\Gamma\left(\frac{\ell+2-r}{2}\right)}}. \label{eq_fxi}
\end{align}
The probability of false alarm for the $(\ell+1)^{\rm th}$ sample where $\ell\in[1:\tau - 1]$, is given by
\begin{align}
P_{f_{\ell+1}}&=\Pr\left\{Z_{\ell+1}>\lambda, \max\left[Z_{1},\cdots,Z_{\ell}\right]<\lambda\mid H_0\right\}. \label{eqn1145}
\end{align}
Note that the joint distribution of two independent random variables $V_1$ and $V_2$ can be given by
\begin{align}
    F_{V_1,V_2} (v_1,v_2)&= \Pr\left\{V_1\leq v_1, V_2\leq v_2\right\} \notag \\
    &= F_{V_1}(v_1) \times F_{V_2}(v_2). \label{eqn746}
\end{align}
Moreover, the distribution of the RV $V = \max\left[V_1, V_2, \ldots, V_n \right]$, where $V_1$ and $V_2$ to $V_n$ are independent RVs, is given by
\begin{align}
    F_V(v) &= \Pr\left\{\max\left[V_1, V_2, \ldots, V_n \right] \leq v \right\} \notag \\
         &= \Pr\left\{V_1 \leq v \right\} \times \Pr\left\{V_2 \leq v \right\} \times \ldots \times  \Pr\left\{V_n \leq v \right\} \notag \\
         &= \prod_{i=1}^{n} {F_{V_i} (v)}. \label{eqn598}
\end{align}
Using (\ref{eqn746}) and (\ref{eqn598}) and substituting in (\ref{eqn1145}), we obtain the expressions in (\ref{P_f_F}) and (\ref{P_f_tot}).

\noindent The probability of detection for the $(\ell+1)^{\rm th}$ sample where $\ell\in[\tau:N - 1]$, is given by
 \begin{align}
P_{d_{\ell+1}}&=\Pr\left\{Z_{\ell+1}>\lambda, \max\left[Z_{\tau},\cdots,Z_{\ell}\right]<\lambda\mid H_1\right\}. \label{eqn5547}
\end{align}
Using (\ref{eqn746}) and (\ref{eqn598}) and substituting in (\ref{eqn5547}), we obtain the closed-form expressions in (\ref{P_d_F}) and (\ref{P_d_tot}). 

It must be noted that $Z_\ell$ are correlated RVs, nevertheless, we assume independence as an approximation to make the closed-form expressions derivations feasible. Due to the recursive nature of $Z_\ell$'s, deriving a closed-form expression with dependence assumption might be infeasible, as also stated in \cite{7355347}. Nevertheless, we will show that the independence approximation provides very close results. 



\bibliographystyle{IEEEtran}
\bibliography{references_QD_Journal}
\end{document}